\pdfoutput=1
\documentclass[conference]{IEEEtran}
\IEEEoverridecommandlockouts

\usepackage{cite}
\usepackage{amsmath,amssymb,amsfonts}
\usepackage{algorithm}
\usepackage{algorithmic}
\usepackage{graphicx}
\usepackage{textcomp}
\usepackage[table]{xcolor}

\usepackage{booktabs}
\usepackage{array}
\usepackage{url}
\usepackage{placeins}
\definecolor{tablegray}{gray}{0.90}
\renewcommand{\algorithmicrequire}{\textbf{Input:}}
\renewcommand{\algorithmicensure}{\textbf{Output:}}

\newcounter{lemma}
\newcounter{proposition}
\newcounter{theorem}
\newcommand{\maybeparen}[1]{\if\relax\detokenize{#1}\relax\else~(#1)\fi}
\newenvironment{lemma}[1][]{\refstepcounter{lemma}\par\noindent\textbf{Lemma~\thelemma\maybeparen{#1}.}\ }{\par}
\newenvironment{proposition}[1][]{\refstepcounter{proposition}\par\noindent\textbf{Proposition~\theproposition\maybeparen{#1}.}\ }{\par}
\newenvironment{theorem}[1][]{\refstepcounter{theorem}\par\noindent\textbf{Theorem~\thetheorem\maybeparen{#1}.}\ }{\par}
\newenvironment{proof}{\par\noindent\textit{Proof.}}{\hfill$\square$\par}
\def\BibTeX{{\rm B\kern-.05em{\sc i\kern-.025em b}\kern-.08em
    T\kern-.1667em\lower.7ex\hbox{E}\kern-.125emX}}
\begin{document}

\title{AERIS: Offline Policy Improvement for Multi-UAV\\
Integrated Sensing and Communication}

\author{%
\IEEEauthorblockN{%
Ziyuan Wang\textsuperscript{1,\#},
Yifan Sui\textsuperscript{2,\#},
Wei Wei\textsuperscript{3},
Wenjie Xin\textsuperscript{4},\\
Zekai Zhang\textsuperscript{1},
Xiangwang Hou\textsuperscript{1},
Xiao-Ping (Steven) Zhang\textsuperscript{1,5}}
\IEEEauthorblockA{\footnotesize
\textsuperscript{1}Tsinghua University, Beijing, China
\hspace{1.5em}\textsuperscript{2}Shanghai Jiao Tong University, China\\
\textsuperscript{3}The University of Hong Kong, Hong Kong
\hspace{1.5em}\textsuperscript{4}The Hong Kong University of Science and Technology (Guangzhou), China\\
\textsuperscript{5}Toronto Metropolitan University, Canada\\
\textsuperscript{\#}Equal contribution.}}

\maketitle

\begin{abstract}
Unmanned aerial vehicle (UAV)-enabled integrated sensing and communication (ISAC) is a promising 6G paradigm, but dynamic multi-UAV ISAC control must jointly balance communication quality, sensing reliability, and flight safety under stochastic mobility. Existing optimization methods often require repeated global non-convex solving, while online reinforcement learning (RL) depends on risky trial-and-error flights that may cause sensing loss or collision-risk events.

This paper proposes AERIS, an offline policy improvement framework for multi-UAV ISAC. AERIS learns from fixed flight logs under centralized training and decentralized execution, so each UAV acts from local histories while training uses logged global information to assess team-level effects. We further design STAR-CRDT, an offline multi-agent RL algorithm that performs support-aware local action rectification and distills only trusted improvements into the decentralized actor. We prove an offline-support policy improvement guarantee. Experiments show that STAR-CRDT improves the main ISAC objective return by 29.3\% over the strongest baseline. It further improves communication sum rate, sensing pass rate, and sensing margin by 3.4\%, 4.8\%, and 69.1\%, while reducing collision-risk events by 54.2\%. On unseen real-road maps built from OpenStreetMap data, STAR-CRDT still obtains the best return.
\end{abstract}

\begin{IEEEkeywords}
ISAC, UAV networks, offline MARL, stochastic optimization
\end{IEEEkeywords}

\section{Introduction}

Integrated sensing and communication (ISAC) has become a key direction for sixth-generation (6G) networks because the same wireless infrastructure can deliver data services and sense the physical environment \cite{liu2022isac,zhang2021overview,liu2020joint_radar_comm}. This joint use of spectrum, antennas, and waveforms is especially valuable for applications that need both connectivity and environmental awareness, such as mobile target monitoring, emergency response, and temporary hotspot service. Unmanned aerial vehicles (UAVs) further enlarge this design space. By adjusting their positions and altitudes, UAVs can rapidly create line-of-sight links, approach sensing targets, and serve areas where fixed infrastructure is unavailable or overloaded \cite{zeng2016uav,zeng2019cellular,mozaffari2019tutorial}. Recent UAV-enabled ISAC studies therefore jointly optimize trajectory, beamforming, user association, and sensing quality \cite{lyu2023uav_isac,meng2022uav_periodic,gao2024marl_uav_isac,cheng2025radcom}.

The same flexibility also makes multi-UAV ISAC control difficult. A local movement of one UAV changes air-to-ground channels, beam directions, inter-user interference, sensing margins, and the distance to other UAVs. Ground users and sensing targets move stochastically, which turns the problem into a long-horizon control task. Classical trajectory and resource optimization methods usually rely on instantaneous global geometry, channel, and association information. Under stochastic mobility, they may also require repeated non-convex optimization or receding-horizon updates, which creates heavy computation and coordination overhead for distributed and fast UAV response \cite{zeng2019energy,mozaffari2017iot,wu2018uav_wpcn}. These properties make reinforcement learning (RL) and multi-agent RL (MARL) a natural tool for dynamic wireless control, because they can learn long-term policies from data without solving a full non-convex program at every slot \cite{mao2016resource,xu2018experience,luong2019deep_rl_wireless,lowe2017maddpg,foerster2018coma,rashid2018qmix}. Offline RL further extends this data-driven formulation to fixed logs, so experience collected from human-operated flights, legacy controllers, or conservative behavior policies can be reused for policy improvement \cite{fu2020d4rl,gulcehre2020rl_unplugged}. Representative offline methods have been developed for behavior-constrained value learning, conservative actor updates, and sequence-model-based control \cite{levine2020offline,fu2020d4rl,gulcehre2020rl_unplugged,fujimoto2019bcq,kumar2019bear,kumar2020cql,kostrikov2022iql,fujimoto2021td3bc,wang2020crr,chen2021decision_transformer}.

Although RL-based control for UAV networking and multi-UAV ISAC is gaining momentum, we observe that current RL solutions still have fundamental limitations that can make multi-UAV sensing and communication inefficient or unreliable. In the online setting, policy improvement still depends on trial rollouts. For multi-UAV ISAC, such rollouts are not just sampling costs. An immature or exploratory policy can lower sensing beam gain, degrade user service, or violate the inter-UAV safety distance before it becomes useful. Offline RL is attractive because it keeps the long-horizon policy-learning ability of RL while moving improvement to fixed data. Yet existing offline designs do not directly resolve the information and support structure of multi-UAV ISAC. A centralized learner can evaluate global state and joint actions, but a deployed UAV can only observe local history. Purely local or imitation-based learners may stay close to behavior logs, but they cannot reliably judge how a local action changes global interference, sensing margin, service quality, and pairwise safety, so residual ISAC violations may persist. More aggressive actor-critic rectification may leave the logged joint-action support and suffer from critic overestimation, which can further degrade sensing and safety \cite{pan2022omar,wang2023omiga}.

These limitations call for a new offline MARL design for multi-UAV ISAC. Such a design should improve beyond imperfect logs, keep each local correction support-aware, and allow distributed UAVs to improve task policies offline before deployment. To address this need, we propose Aerial Experience-guided Reinforcement for ISAC Systems (AERIS), a novel offline distributed policy improvement framework for multi-UAV ISAC. AERIS first turns logged flight experience into a centralized training and decentralized execution (CTDE) offline MARL problem, which avoids additional risky exploration while preserving local UAV execution. It then uses centralized training information to evaluate whether a local action change improves the team-level communication, sensing, and safety objective. To improve beyond behavior imitation without leaving reliable support, we design Support-aware Trust-gated Actor Rectification for Critic-Regularized Decision Transformer (STAR-CRDT). STAR-CRDT keeps a shared local-history actor for execution, searches for candidate corrections near logged and actor actions, scores them with a centralized critic, and distills a correction only when its critic gain and behavior proximity are reliable. Together, these choices turn centralized critic feedback into a support-aware teacher for the local actor, allowing AERIS to improve beyond behavior logs without degenerating into pure imitation or unsupported critic maximization.

Our contributions are summarized as follows.

\begin{itemize}
\item We propose AERIS, a novel offline policy improvement framework for stochastic long-horizon multi-UAV ISAC control. AERIS enables distributed UAVs to refine communication, sensing, and safety decisions from existing flight experience, reducing the need for risky online trial flights before deployment.
\item We design STAR-CRDT, a new offline MARL algorithm for support-aware local policy correction. By combining centralized critic scoring with trust-gated distillation, STAR-CRDT improves the decentralized actor while controlling support shift in fixed-log training.
\item We provide a proof of offline-support policy improvement and evaluate AERIS against mainstream offline RL/MARL baselines. The results show strong offline policy improvement, consistent system-scale gains, and the best return on unseen real-road maps built from OpenStreetMap (OSM) data.
\end{itemize}

\section{Background and Motivation}

This section presents two diagnostics that motivate offline policy improvement for multi-UAV ISAC. We first show why direct online policy improvement is risky, then examine whether existing fixed-log methods can remove this risk, and finally summarize the design requirements that guide AERIS.

\subsection{Online Trial Risk}

Multi-UAV ISAC requires a controller to trade off communication rate, sensing feasibility, sensing robustness, and flight safety. Existing studies model these objectives through coupled trajectory and beamforming variables \cite{lyu2023uav_isac,meng2022uav_periodic,gao2024marl_uav_isac,cheng2025radcom}. Online RL can adapt to the resulting stochastic dynamics, but it must collect trial rollouts. Safe and constrained RL studies have long recognized that unsafe exploration should be measured as a constraint cost rather than hidden inside average reward \cite{garcia2015safe,achiam2017cpo,chow2018lyapunov}.

Fig.~\ref{fig:motivation}(a) reports an online-learning diagnostic under an evaluation model aligned with recent multi-UAV ISAC trajectory and beamforming studies \cite{lyu2023uav_isac,meng2022uav_periodic,gao2024marl_uav_isac,cheng2025radcom}. Intermediate Twin Delayed Deep Deterministic Policy Gradient (TD3) checkpoints trigger collision-risk events before the controller becomes useful \cite{fujimoto2018td3}. After a behavior controller is obtained, additional Gaussian action noise keeps the communication sum rate roughly stable but increases sensing-threshold violations. Thus, a rollout can look acceptable from the communication metric while still damaging sensing reliability and safety. This makes continued online improvement undesirable for physical multi-UAV ISAC deployment.

\begin{figure}[t]
\centering
\begin{minipage}[t]{0.52\linewidth}
\centering
\includegraphics[width=\linewidth]{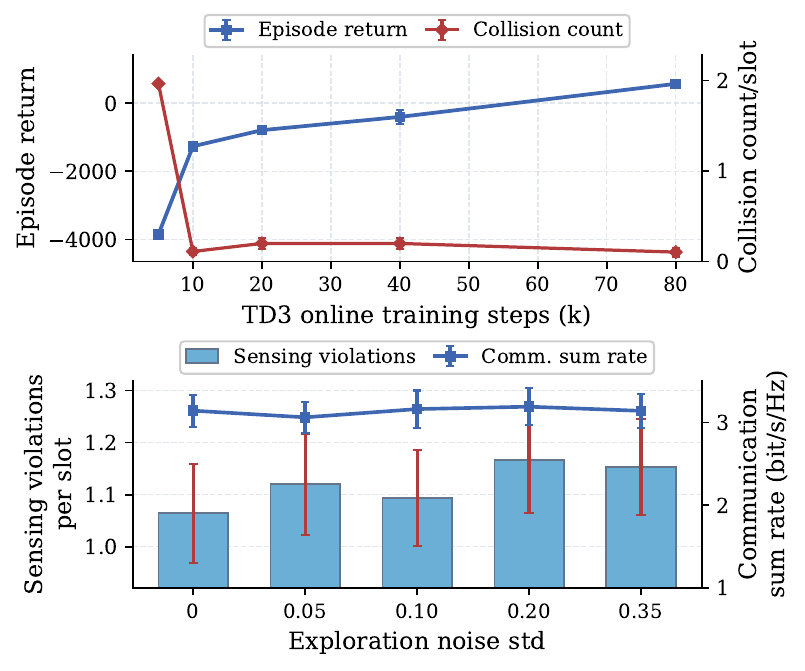}
\vspace{-1mm}
{\footnotesize (a) Risk during online improvement.}
\end{minipage}
\hfill
\begin{minipage}[t]{0.43\linewidth}
\centering
\includegraphics[width=\linewidth]{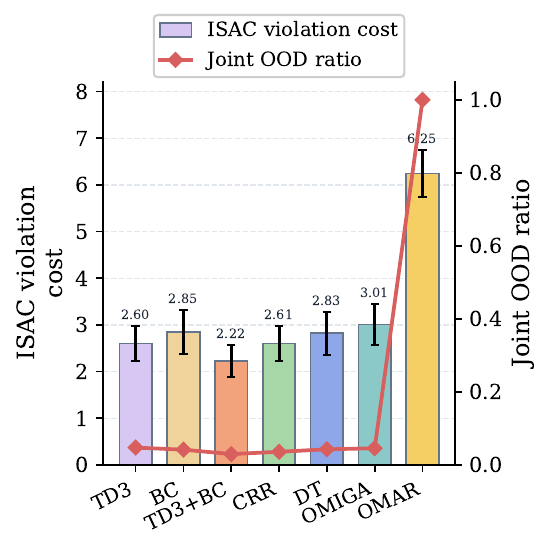}
\vspace{-1mm}
{\footnotesize (b) Support shift of offline baselines.}
\end{minipage}
\caption{Motivation diagnostics. Online trial actions can create sensing and safety risk. Generic offline baselines face an imitation and out-of-distribution (OOD) dilemma under fixed joint-action support.}
\label{fig:motivation}
\end{figure}

\subsection{Fixed-Log Baseline Diagnostic}

Offline RL is a natural candidate for avoiding new exploratory flights because it improves policies from a fixed dataset. Representative offline RL solutions have been widely adopted as fixed-log baselines in control and networking studies \cite{levine2020offline,fu2020d4rl,gulcehre2020rl_unplugged,fujimoto2021td3bc,wang2020crr,chen2021decision_transformer}. However, multi-UAV ISAC requires an offline update to improve a locally executed action while its effect is judged by the coupled team-level communication, sensing, and safety objective. This creates an information mismatch between centralized evaluation and decentralized execution. Conservative or imitation-based methods may inherit residual ISAC violations from the behavior policy, whereas aggressive actor rectification may leave the logged joint-action support and amplify critic error \cite{pan2022omar,wang2023omiga}.

Fig.~\ref{fig:motivation}(b) illustrates this tradeoff with representative fixed-log baselines. The bars report ISAC violation cost, which combines sensing-threshold violations and collision-risk events under the same penalties used by the reward. The line reports the joint-action OOD ratio measured by a $k$-nearest-neighbor support test against logged joint actions. Conservative methods remain closer to the log but still leave non-negligible ISAC violations, while aggressive rectification moves farther outside support and incurs larger violation cost. Therefore, directly porting state-of-the-art offline RL/MARL methods does not simultaneously provide policy improvement and reliable support control for multi-UAV ISAC.

\subsection{Motivation and Design Requirements}

The two diagnostics lead to three requirements for offline multi-UAV ISAC control.

\begin{itemize}
\item \textbf{Avoid repeated exploratory flights.} Policy improvement should avoid repeated exploratory flights after an initial log is available. If no public flight log exists, limited online learning can be used only to construct a behavior log, after which the dataset should be frozen and all policy refinement should happen offline.
\item \textbf{Preserve distributed UAV execution.} The trained controller must remain executable by distributed UAVs. Each UAV should act from its local history, while training may use logged global states and joint actions to evaluate team-level communication, sensing, and safety effects.
\item \textbf{Improve within reliable offline support.} The offline update should improve beyond imperfect behavior logs without trusting unsupported actions. This requires local corrections that are selected by a centralized critic but filtered by their proximity to the offline data support.
\end{itemize}

Taken together, these requirements motivate AERIS and highlight the main difficulty: offline policy improvement should use centralized training signals to identify team-level gains, yet convert them into a decentralized UAV controller without degenerating into pure imitation or unsupported critic maximization.

\section{System and Problem Formulation}

We adopt a standard UAV-enabled ISAC trajectory-and-beamforming model and formulate a generic communication-sensing optimization problem \cite{lyu2023uav_isac,meng2022uav_periodic}.

\subsection{System Model}

The system model describes the physical control environment in which moving UAVs serve communication users and sensing targets. Each movement or beam choice affects communication quality, sensing reliability, and safety separation over time.

We consider $M$ UAVs serving $K$ communication users and $L$ sensing targets in a $L_x\times L_y$ area. UAV $m$ has position $\mathbf q_m(t)=[x_m(t),y_m(t),z_m(t)]^T$. During log collection and random-mobility evaluation, users and targets follow bounded Brownian-style random walks over the full region \cite{camp2002survey}. This map-agnostic prior represents high-randomness mobility rather than a fixed road topology. During road-map deployment, entities still move stochastically, but every horizontal update is restricted to OSM road segments and Geographic Information System (GIS) layers. Dense grids induce frequent route branching, arterial corridors concentrate demand along a few directions, and sparse local roads create uneven service opportunities \cite{openstreetmap_copyright,geofabrik_osm_gis}.

\begin{figure}[t]
\centering
\includegraphics[width=0.78\linewidth]{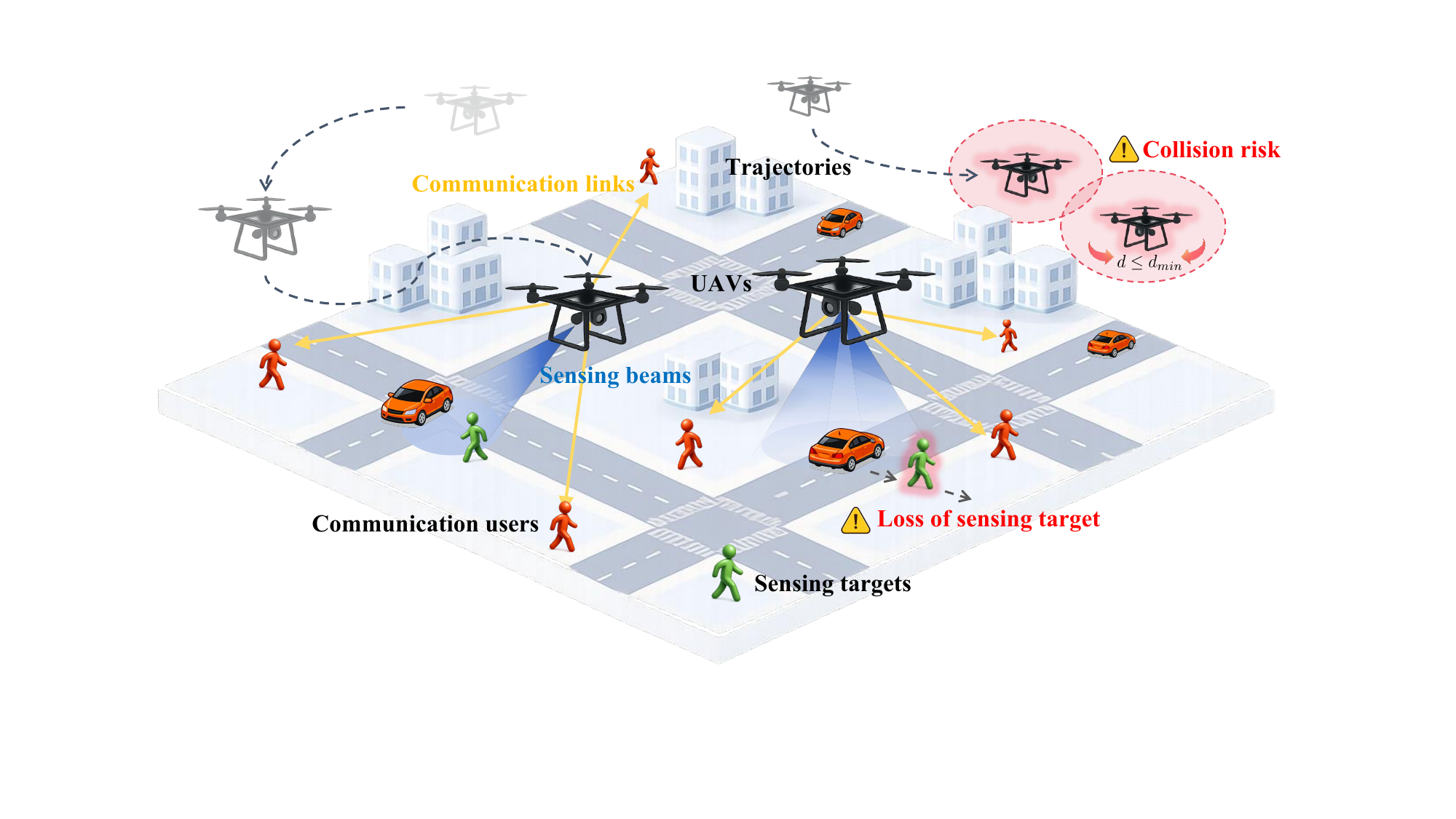}
\caption{Dynamic multi-UAV ISAC control under local execution.}
\label{fig:system}
\end{figure}

Each UAV uses an $N_t$-element transmit array for downlink communication and sensing. The air-to-ground channel $\mathbf h_{m,k}(t)$ captures path loss, low-altitude attenuation, and array steering, as widely used for UAV links \cite{alhourani2014optimal,zeng2019energy}. With beamformer $\mathbf w_{m,k}(t)$ steered from UAV $m$ to user $k$, the received signal-to-interference-plus-noise ratio (SINR) is
\begin{equation}
\gamma_{m,k}(t)=
\frac{|\mathbf h_{m,k}^H(t)\mathbf w_{m,k}(t)|^2}
{\sigma^2+\sum_{(i,j)\ne(m,k)}|\mathbf h_{i,k}^H(t)\mathbf w_{i,j}(t)|^2},
\label{eq:comm_utility}
\end{equation}
where $\sigma^2$ is the noise power. The communication rate is $R_{m,k}(t)=B\log_2(1+\gamma_{m,k}(t))$, and $U_{\rm comm}(t)=\sum_m\sum_{k\in\mathcal C_m(t)}R_{m,k}(t)$ is the system sum-rate. This standard metric directly improves spectral efficiency in UAV-enabled ISAC trajectory and beamforming designs \cite{lyu2023uav_isac,meng2022uav_periodic}. For sensing, let $\mathbf R_m(t)=\sum_{k\in\mathcal C_m(t)}\mathbf w_{m,k}(t)\mathbf w_{m,k}^H(t)$ be the transmit covariance. The beampattern gain $G_{m,\ell}(t)=\mathbf a_{m,\ell}^H(t)\mathbf R_m(t)\mathbf a_{m,\ell}(t)$ measures energy focused toward target $\ell$, and $\Delta_{m,\ell}^{\rm sen}(t)=G_{m,\ell}(t)-\Gamma_{m,\ell}(t)$ is its threshold margin. The sensing utility is
\begin{align}
U_{\rm sen}(t)=&\;\alpha_p\sum_{m,\ell}\mathbb I\{\Delta_{m,\ell}^{\rm sen}(t)\ge0\} \nonumber\\
&+\alpha_m\sum_{m,\ell}
\tanh\!\left(\frac{\Delta_{m,\ell}^{\rm sen}(t)}
{\max(\Gamma_{m,\ell}(t),\varepsilon)}\right),
\label{eq:sen_utility}
\end{align}
where the two terms reward target-threshold satisfaction and normalized beampattern margin. Optimizing this form keeps targets detectable and preserves robustness against mobility-induced geometry changes \cite{lyu2023uav_isac,meng2022uav_periodic,gao2024marl_uav_isac}.

\subsection{Optimization Problem}

The optimization asks each UAV to decide where to fly and how to form beams over time, so that users receive high-rate service, sensing targets remain detectable, and UAVs stay safely separated.

Under distributed execution, the local information available to UAV $m$ includes its own three-dimensional (3-D) position, relative geometry to associated communication users and sensing targets, and local channel and sensing steering estimates, without instantaneous global state exchange. The decision variables are the UAV trajectories $\mathbf q$ and beamformers $\mathbf w$ over horizon $T$. The objective is to maximize a weighted communication and sensing utility that evaluates the locally executed decisions by communication rate, sensing feasibility, and sensing robustness \cite{liu2022isac,lyu2023uav_isac,meng2022uav_periodic},
\begin{align}
\textbf{P0:}\quad
\max_{\{\mathbf q,\mathbf w\}}\;&
\sum_{t=0}^{T-1}\bigl[\lambda_c U_{\rm comm}(t)+\lambda_s U_{\rm sen}(t)\bigr]
\label{eq:p0}\\
\text{s.t.}\quad
&G_{m,\ell}(t)\ge \Gamma_{m,\ell}(t),\quad \forall m,\ell,t, \nonumber\\
&\sum_{k\in\mathcal C_m(t)}\|\mathbf w_{m,k}(t)\|_2^2\le P_{\max},
\quad \forall m,t, \nonumber\\
&\|\mathbf q_m(t)-\mathbf q_n(t)\|_2\ge d_{\min},\quad
\forall m\ne n,t, \nonumber
\end{align}
where the constraints enforce sensing-threshold satisfaction, per-UAV transmit power, and inter-UAV safety separation. Sensing-threshold violations can make returned data too weak to maintain target awareness \cite{lyu2023uav_isac,meng2022uav_periodic}. A collision-risk event occurs when pairwise distance falls below $d_{\min}$, and standard UAV kinematic limits are enforced during trajectory generation \cite{zeng2019energy,mozaffari2017iot}.

\subsection{Key Challenges}

Solving \textbf{P0} presents three key challenges.

\begin{itemize}
\item \textbf{Challenge 1: Stochastic non-convex ISAC control.} The SINR and beampattern terms jointly depend on UAV positions, beamformers, interference, and geometry-dependent associations. User and target mobility further makes \textbf{P0} a long-horizon stochastic control problem, so repeatedly solving a centralized trajectory-and-beamforming program is impractical for fast distributed UAV response.
\item \textbf{Challenge 2: Online policy-improvement risk.} A learning controller for \textbf{P0} would normally require trial rollouts to estimate long-term rewards and constraint effects. In physical multi-UAV ISAC, such exploratory flights can consume flight resources, reduce communication service, weaken sensing returns, or create collision-risk events before the policy converges.
\item \textbf{Challenge 3: Fixed-log distributed improvement.} Freezing a flight log removes new exploration, but it also makes policy improvement support-sensitive. Training can use logged global states and joint actions to evaluate team-level communication, sensing, and safety effects, whereas each deployed UAV must act from local histories. Thus, an offline learner must convert centralized training evidence into decentralized policy updates without imitating residual violations or trusting unsupported joint actions.
\end{itemize}

\section{AERIS Design}

AERIS addresses these challenges through a fixed-log CTDE design. It reformulates the stochastic non-convex control problem as sequential local action selection, preserving distributed execution while allowing training to evaluate team-level ISAC rewards. It then freezes flight experience collected by a conservative behavior controller, so policy refinement does not require additional exploratory flights. Finally, it uses STAR-CRDT to turn centralized critic feedback into support-aware updates of the local-history actor. Thus, AERIS defines the policy-improvement pipeline, while STAR-CRDT is the offline MARL optimizer inside it.

\subsection{Problem Transformation and Offline Dataset}

AERIS transforms \textbf{P0} from direct trajectory-and-beamforming optimization into distributed agent action selection. At slot $t$, the local observation $o_t^m$ of UAV $m$ corresponds to the local information described in Section~III-B. A short local history $h_t^m$ stacks recent observations, previous local actions, and return-conditioning signals. The decentralized actor outputs $a_t^m=\pi_\theta(h_t^m)$, where $a_t^m$ controls UAV displacement and beam residuals, and $\mathbf a_t=(a_t^1,\ldots,a_t^M)$ is the joint action. This conversion preserves distributed execution because each UAV acts only from local information, while the training process may still use global logged states to evaluate the joint effect of the selected actions.

Under this RL reformulation, AERIS optimizes
\begin{equation}
\textbf{P1:}\quad
\max_{\pi}\mathbb E_{\pi}\Bigl[\sum_{t=0}^{T-1}r_t\Bigr],
\label{eq:p1}
\end{equation}
with per-slot reward
\begin{align}
r_t={}&\lambda_r U_{\rm comm}(t)+\lambda_p P_{\rm pass}(t)
+\lambda_m M_{\rm sen}(t) \nonumber\\
&-\zeta N_t^{\rm vio}-c_{\rm col}N_t^{\rm col},
\label{eq:reward}
\end{align}
where the positive terms are slot-level slices of \textbf{P0}. $U_{\rm comm}(t)$ rewards communication spectral efficiency, $P_{\rm pass}(t)$ counts sensing-threshold passes, and $M_{\rm sen}(t)$ measures normalized sensing margin. The negative terms relax the constraints in \textbf{P0}, where $N_t^{\rm vio}$ counts sensing-threshold violations and $N_t^{\rm col}$ counts collision-risk events. Such reward-and-penalty reformulation is standard in RL-based networking and constrained RL \cite{mao2016resource,xu2018experience,luong2019deep_rl_wireless,achiam2017cpo}. It keeps the communication, sensing, and safety meanings of \textbf{P0}, while making the problem suitable for fixed-log policy learning.

There is no public flight-log dataset for multi-UAV ISAC with joint communication, sensing, and safety labels. We therefore collect a fixed behavior dataset before offline training, following common offline RL practice \cite{fu2020d4rl,gulcehre2020rl_unplugged}. The behavior policy uses TD3 under the same action and reward semantics as \textbf{P1}. TD3 is suitable here because it is a strong continuous-control actor-critic method, and TD3 has been widely used as a continuous-control backbone for communication and networking optimization problems \cite{fujimoto2018td3,luong2019deep_rl_wireless}. With clipped target noise $\epsilon$, TD3 uses
\begin{equation}
y_t^{\rm TD3}=r_t+\gamma\min_{j=1,2}Q_{\bar\phi_j}
\bigl(x_{t+1},\pi_{\bar\theta}(x_{t+1})+\epsilon\bigr),
\label{eq:td3_target}
\end{equation}
to fit twin critics and update the actor by delayed deterministic policy-gradient steps \cite{fujimoto2018td3}. During collection, each UAV uses only local geometry, service information, neighbor separation, and recent rewards, so the resulting log respects the distributed information pattern.

After the preliminary controller is obtained, AERIS treats its rollouts as surrogate flight logs. They play the role of human-pilot or legacy-controller records, which contain useful behavior but may retain sensing and safety violations. The dataset is then frozen. Each trajectory $\tau\in\mathcal D$ stores $\tau=\{(s_t,\mathbf a_t,r_t,s_{t+1},\{h_t^m\}_{m=1}^M)\}_{t=0}^{T-1}$, where $s_t$ and $\mathbf a_t$ support centralized critic training and $h_t^m$ supports decentralized actor learning. TD3 only constructs the log, whereas all reported AERIS policy improvement comes from offline learning.

\subsection{STAR-CRDT Offline Learning}

STAR-CRDT follows one principle. A UAV should change a logged action only when the centralized critic predicts a global ISAC improvement and the change remains close to reliable offline support. Fig.~\ref{fig:workflow} illustrates this procedure. The actor first learns a local sequence policy from the log, the centralized critic evaluates local corrections in joint-action contexts, and the trust gate decides how much correction can be distilled back into the decentralized actor.

\begin{figure}[t]
\centering
\includegraphics[width=\linewidth]{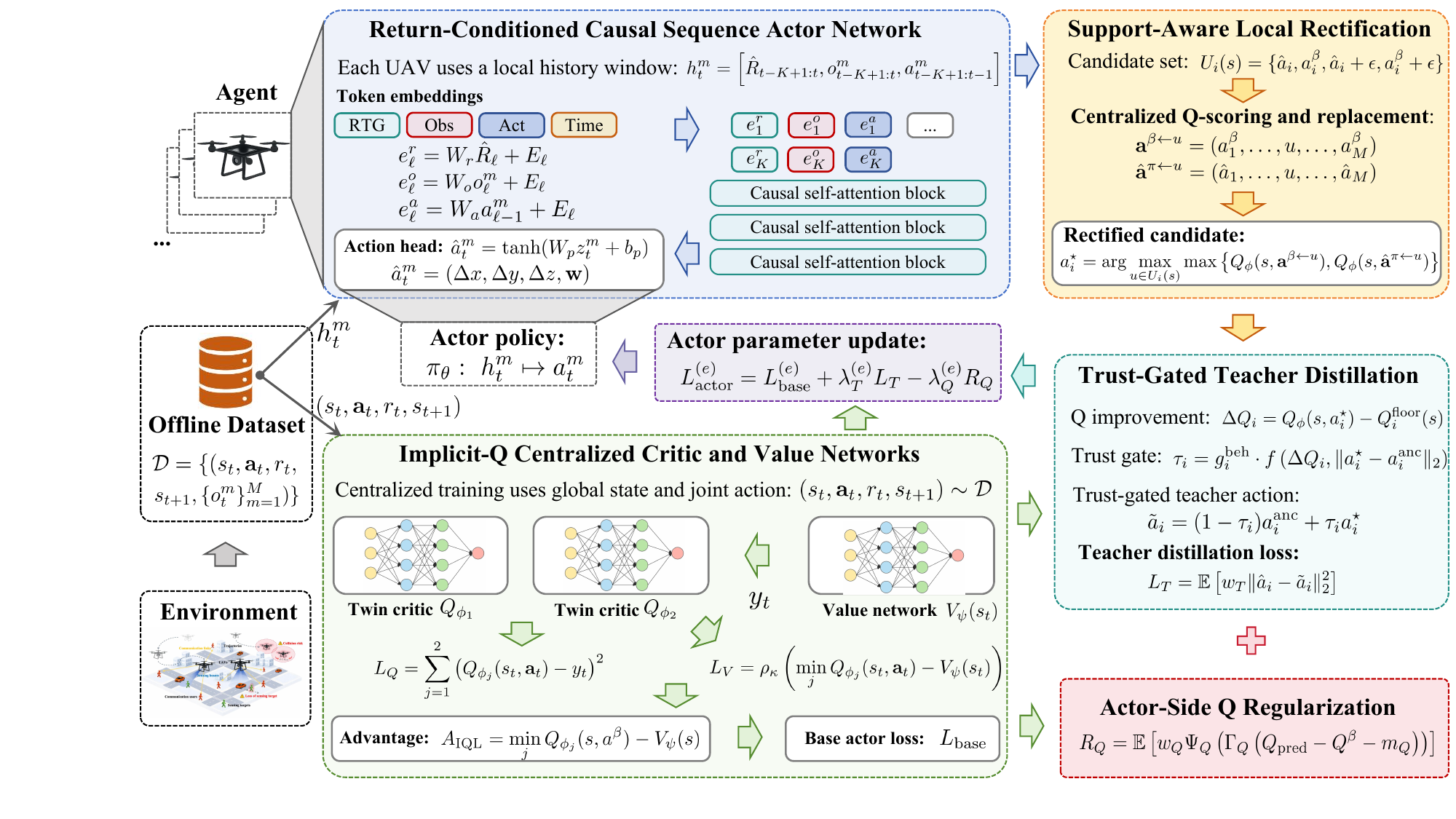}
\caption{Workflow of STAR-CRDT. The actor executes locally, while the centralized critic selects and filters local corrections during offline training.}
\label{fig:workflow}
\end{figure}

\begin{algorithm}[t]
\caption{STAR-CRDT Offline Training}
\label{alg:star_crdt}
\begin{algorithmic}[1]
\STATE \algorithmicrequire{} offline dataset $\mathcal D$, shared actor $\pi_\theta$, twin critics $Q_{\phi_1},Q_{\phi_2}$, value network $V_\psi$
\STATE \algorithmicensure{} decentralized STAR-CRDT actor $\pi_\theta$ and the best validation checkpoint
\FOR{each training epoch $e$}
\STATE sample mini-batches of fixed-length local-history windows from $\mathcal D$
\STATE run the shared actor on each local history to obtain $\widehat a_i=\pi_\theta(h_i)$
\STATE update $Q_{\phi_1},Q_{\phi_2}$ by minimizing \eqref{eq:q_loss}
\STATE update $V_\psi$ using the expectile value loss
\STATE compute the backbone actor loss $\mathcal L_{\rm base}^{(e)}$
\FOR{each selected token and UAV $i$}
\STATE construct $\mathcal U_i(s)$ by \eqref{eq:candidate_set}
\STATE evaluate behavior and actor contexts by the centralized critic
\STATE select the rectified action $a_i^\star$ by \eqref{eq:rect_action}
\STATE choose $a_i^{\rm anc}$ and construct $\widetilde a_i$ by \eqref{eq:trust}
\ENDFOR
\STATE build teacher targets and compute $\mathcal L_T^{(e)}$
\STATE compute the centralized Q-regularizer $\mathcal R_Q^{(e)}$
\STATE update the actor with \eqref{eq:actor_obj}
\STATE update target networks and record the best checkpoint
\ENDFOR
\end{algorithmic}
\end{algorithm}

\textbf{Local sequence actor and centralized critic.} Each UAV uses the same return-conditioned causal Transformer actor $\pi_\theta(h_t^m)$, inspired by Decision Transformer \cite{chen2021decision_transformer}. Parameter sharing lets all UAV logs train a common local rule across different local geometries. The actor is deliberately local, since a globally conditioned actor would require real-time global information exchange. However, local execution alone cannot judge whether a local correction improves the team objective. STAR-CRDT therefore trains centralized twin critics $Q_{\phi_1}(s_t,\mathbf a_t)$ and $Q_{\phi_2}(s_t,\mathbf a_t)$, together with a value network $V_\psi(s_t)$, using logged global states and joint actions. This follows the CTDE principle \cite{lowe2017maddpg,foerster2018coma,rashid2018qmix}. The critic evaluates global communication, sensing, and safety effects during training, while deployment uses only the local actor in Fig.~\ref{fig:workflow}.

\textbf{Centralized implicit-Q evaluation.} The next issue is conservative value estimation under fixed data. Directly maximizing a learned offline critic can select unsupported actions with overestimated values \cite{levine2020offline,kumar2020cql}. STAR-CRDT instead learns a ranking signal on logged joint actions. For each transition, STAR-CRDT minimizes
\begin{equation}
\mathcal L_Q=\frac12\mathbb E_{\mathcal D}\sum_{j=1}^2\bigl(Q_{\phi_j}(s_t,\mathbf a_t)-y_t\bigr)^2,
\label{eq:q_loss}
\end{equation}
where $y_t=r_t+\gamma V_\psi(s_{t+1})$. With $\bar Q_t=\min_j Q_{\phi_j}(s_t,\mathbf a_t)$, the value loss is $\mathcal L_V=\mathbb E_{\mathcal D}[\rho_\kappa(\bar Q_t-V_\psi(s_t))]$, where $\rho_\kappa(u)=|\kappa-\mathbb I\{u<0\}|u^2$ follows implicit Q-learning \cite{kostrikov2022iql}. The twin-critic minimum suppresses optimistic error, and the expectile value provides an in-support baseline.

\textbf{Behavior-preserving backbone.} Before critic-guided correction, the actor first learns a feasible local behavior. For a logged action $\mathbf a^\beta$, STAR-CRDT computes
\begin{equation}
A_{\rm IQL}(s,\mathbf a^\beta)=\min_j Q_{\phi_j}(s,\mathbf a^\beta)-V_\psi(s),
\label{eq:iql_adv}
\end{equation}
and $w_{\rm IQL}=\min\{\exp(\beta_{\rm IQL}[A_{\rm IQL}]_+),w_{\max}^{\rm IQL}\}$. The backbone loss mixes plain behavior cloning and advantage-weighted behavior cloning. This anchors the sequence actor near logged support before any correction is introduced.

\textbf{Support-aware local rectification.} A behavior-preserving actor can still inherit residual ISAC violations. STAR-CRDT therefore searches for local corrections only near the actor and behavior actions as follows.
\begin{align}
\mathcal U_i(s)=&\{\hat a_i,a_i^\beta\}\cup
\{\mathrm{clip}(\hat a_i+\sigma_\pi\xi_k)\}_{k=1}^{N_r} \nonumber\\
&\cup\{\mathrm{clip}(a_i^\beta+\sigma_\beta\zeta_k)\}_{k=1}^{N_r},
\label{eq:candidate_set}
\end{align}
where $\xi_k$ and $\zeta_k$ are local perturbations around the actor and behavior actions. Each candidate $u$ is tested in two centralized contexts. Let $\mathbf a_u^\beta$ replace only the $i$-th behavior action and $\mathbf a_u^\pi$ replace only the $i$-th actor action. With $Q_\phi=\min_jQ_{\phi_j}$,
\begin{equation}
a_i^\star\in\arg\max_{u\in\mathcal U_i(s)}\max_{\chi\in\{\beta,\pi\}}Q_\phi(s,\mathbf a_u^\chi).
\label{eq:rect_action}
\end{equation}
Thus, a local correction is accepted only after the centralized critic evaluates its effect on the joint ISAC return.

\textbf{Trust-gated distillation.} A critic-improving candidate can still be unreliable if its gain is small or its distance from support is large. STAR-CRDT therefore distills a conservative teacher. Let $a_i^{\rm anc}$ be the anchor action and $\Delta Q_i=Q_\phi(s,\mathbf a^\star)-Q_i^{\rm floor}(s)$. The trust coefficient is
\begin{equation}
\tau_i=\tau_{\max}\bigl(1-e^{-\beta_\tau[\Delta Q_i-m_\tau]_+}\bigr)e^{-\|a_i^\star-a_i^{\rm anc}\|_2^2/\sigma_\tau}.
\label{eq:trust}
\end{equation}
The teacher is $\widetilde a_i=(1-\tau_i)a_i^{\rm anc}+\tau_i a_i^\star$. The actor is updated by
\begin{equation}
\mathcal L_{\rm actor}^{(e)}=\mathcal L_{\rm base}^{(e)}+\lambda_T^{(e)}\mathcal L_T-\lambda_Q^{(e)}\mathcal R_Q,
\label{eq:actor_obj}
\end{equation}
where $\mathcal L_T$ distills the teacher and $\mathcal R_Q$ rewards predicted joint actions only when they improve over logged behavior. Algorithm~\ref{alg:star_crdt} summarizes the workflow. During training, STAR-CRDT repeatedly samples fixed-log local histories, updates centralized critics and value estimates, constructs support-aware local teachers, and distills them into the shared actor. After training, only $\pi_\theta(h_t^m)$ is deployed on UAVs. All critic, value, candidate-search, and trust-gate modules are removed from runtime control, so each UAV can make autonomous real-time decisions from its own local history.

\subsection{Policy Improvement Proof}

We prove the trust-gated local correction on the offline-support distribution, rather than claiming global optimality outside the dataset. Let $d_\beta$ be the state distribution induced by the behavior log and let $\mathbf a_{-i}^{\rm ctx}$ be the fixed context actions when only UAV $i$ is varied. The local rectification gain is
\begin{equation}
\Delta_i(s)=Q_\phi(s,(a_i^\star,\mathbf a_{-i}^{\rm ctx}))-Q_\phi(s,(a_i^{\rm anc},\mathbf a_{-i}^{\rm ctx})).
\label{eq:gain}
\end{equation}

Following standard offline actor-critic and smooth optimization analyses \cite{levine2020offline,kumar2020cql,kostrikov2022iql,boyd2004convex}, we use four local assumptions.

\textbf{Assumption 1.} The critic correctly ranks candidates in $\mathcal U_i(s)$, and the anchor satisfies $a_i^{\rm anc}\in\mathcal U_i(s)$.

\textbf{Assumption 2.} The local critic slice
\begin{equation}
g_i(u;s)=Q_\phi(s,(u,\mathbf a_{-i}^{\rm ctx})),
\label{eq:local_q_map}
\end{equation}
is concave on the segment between $a_i^{\rm anc}$ and $a_i^\star$.

\textbf{Assumption 3.} The post-update actor $a_i^+=\pi_{\theta^+}(h_i)$ satisfies $\mathbb E_{s\sim d_\beta}\|a_i^+-\widetilde a_i\|_2\le\varepsilon$.

\textbf{Assumption 4.} For all relevant $u$ and $v$,
\begin{equation}
|g_i(u;s)-g_i(v;s)|\le L_Q\|u-v\|_2.
\label{eq:lipschitz_assumption}
\end{equation}
Assumption 1 is the usual local ranking condition for using a learned offline critic only within supported candidates. Assumptions 2 and 4 are local regularity conditions on the short segment used by the trust gate, not global claims about the non-convex wireless objective. Assumption 3 states that supervised distillation tracks the teacher with bounded approximation error.

\begin{lemma}[Nonnegative rectification gain]
For every offline-support state $s$, $\Delta_i(s)\ge0$.
\end{lemma}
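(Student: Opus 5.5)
The plan is to obtain the lemma as a direct consequence of the selection rule \eqref{eq:rect_action} together with Assumption 1. No regularity assumptions (Assumptions 2--4) are needed here; they enter only later, when the gain is transferred to the distilled actor.

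\textbf{Step 1: the maximizer dominates the anchor.} Fix an offline-support state $s$ and a UAV $i$. Assumption 1 states that the critic correctly ranks the finite candidate set $\mathcal U_i(s)$ and that $a_i^{\rm anc}\in\mathcal U_i(s)$. The rectified action $a_i^\star$ is chosen by \eqref{eq:rect_action} as a maximizer of the critic score over $\mathcal U_i(s)$. Since the anchor is itself an admissible candidate, the critic value at $a_i^\star$ is at least the critic value at $a_i^{\rm anc}$. This is the elementary fact that a maximum over a set dominates the value at any member of that set.

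\textbf{Step 2: align the contexts.} The rule \eqref{eq:rect_action} maximizes over two contexts $\chi\in\{\beta,\pi\}$, whereas \eqref{eq:gain} uses a single fixed context $\mathbf a_{-i}^{\rm ctx}$. I will therefore take $\mathbf a_{-i}^{\rm ctx}$ to be the context $\chi^\star$ that attains the outer maximum for $a_i^\star$. This matches the wording ``the fixed context actions when only UAV $i$ is varied'' and the definition of $g_i(\cdot;s)$ in \eqref{eq:local_q_map}. With this choice, Step 1 gives
\[
g_i(a_i^\star;s)=\max_{u\in\mathcal U_i(s)}\max_{\chi}Q_\phi(s,\mathbf a_u^\chi)\ge Q_\phi(s,\mathbf a^{\chi^\star}_{a_i^{\rm anc}})=g_i(a_i^{\rm anc};s).
\]
Hence $\Delta_i(s)=g_i(a_i^\star;s)-g_i(a_i^{\rm anc};s)\ge0$.

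\textbf{Main obstacle: consistency of the context.} If the paper instead intends $\mathbf a_{-i}^{\rm ctx}$ to be a context fixed in advance, say the behavior context, the selection of $a_i^\star$ via the double maximum does not by itself guarantee dominance in that particular context. In that case I would read Assumption 1 (``correctly ranks candidates'') as asserting that the critic's ranking of the candidates in $\mathcal U_i(s)$ is context-consistent. Under that reading, the maximizer of the double maximum is also the maximizer of $g_i(\cdot;s)$ over $\mathcal U_i(s)$, and the same one-line argument applies. I will state this interpretation explicitly so that the inequality holds pointwise for every $s$ in the support of $d_\beta$, which is exactly what the lemma claims.
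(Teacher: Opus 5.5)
Your proposal is correct and follows the paper's own argument: $a_i^\star$ maximizes the critic score over the finite set $\mathcal U_i(s)$, and Assumption 1 places $a_i^{\rm anc}$ in that set, so the anchor's value cannot exceed the maximum. Your Step 2 adds something the paper's one-line proof leaves implicit: it sets $\mathbf a_{-i}^{\rm ctx}$ to the context $\chi^\star$ that attains the outer maximum in \eqref{eq:rect_action}, or else reads Assumption 1 as a context-consistent ranking, and either choice correctly matches the single-context gain in \eqref{eq:gain} to the two-context selection rule.
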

\begin{proof}
The selected action $a_i^\star$ maximizes the critic value over $\mathcal U_i(s)$, and $a_i^{\rm anc}\in\mathcal U_i(s)$ by Assumption 1. Hence the selected action cannot have a smaller critic value than the anchor, which gives \eqref{eq:gain}.
\end{proof}

Lemma 1 shows why local corrections are evaluated in a joint-action context. The candidate accepted by STAR-CRDT is never worse than the anchor under the centralized critic.

\begin{proposition}[Gain of the trust-gated teacher]
For $\widetilde a_i=(1-\tau_i)a_i^{\rm anc}+\tau_i a_i^\star$ and $0\le\tau_i\le1$,
\begin{equation}
Q_\phi(s,(\widetilde a_i,\mathbf a_{-i}^{\rm ctx}))\ge Q_\phi(s,(a_i^{\rm anc},\mathbf a_{-i}^{\rm ctx}))+\tau_i\Delta_i(s).
\label{eq:teacher_gain}
\end{equation}
\end{proposition}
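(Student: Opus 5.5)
The inequality follows directly from the concavity in Assumption~2, applied to the local critic slice $g_i(\cdot;s)$ from \eqref{eq:local_q_map}. Fix an offline-support state $s$ and hold the context actions $\mathbf a_{-i}^{\rm ctx}$ fixed. With this notation, the two sides of \eqref{eq:teacher_gain} become values of $g_i$ at points of the segment joining $a_i^{\rm anc}$ and $a_i^\star$. Specifically, $Q_\phi(s,(\widetilde a_i,\mathbf a_{-i}^{\rm ctx}))=g_i(\widetilde a_i;s)$ and $Q_\phi(s,(a_i^{\rm anc},\mathbf a_{-i}^{\rm ctx}))=g_i(a_i^{\rm anc};s)$. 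Likewise, by \eqref{eq:gain}, $\Delta_i(s)=g_i(a_i^\star;s)-g_i(a_i^{\rm anc};s)$.

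Since $0\le\tau_i\le1$, the teacher $\widetilde a_i=(1-\tau_i)a_i^{\rm anc}+\tau_i a_i^\star$ is a convex combination of the two endpoints and so lies on that segment. This is exactly where Assumption~2 is available. Concavity of $g_i$ along the segment then gives
\begin{equation*}
g_i(\widetilde a_i;s)\ge(1-\tau_i)\,g_i(a_i^{\rm anc};s)+\tau_i\,g_i(a_i^\star;s).
\end{equation*}
Rearranging the right-hand side as $g_i(a_i^{\rm anc};s)+\tau_i\bigl(g_i(a_i^\star;s)-g_i(a_i^{\rm anc};s)\bigr)=g_i(a_i^{\rm anc};s)+\tau_i\Delta_i(s)$ yields \eqref{eq:teacher_gain}.

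It remains to check that the trust coefficient from \eqref{eq:trust} lies in $[0,1]$. Its first factor lies in $[0,\tau_{\max}]$ and its exponential factor lies in $(0,1]$, so $\tau_i\in[0,1]$ whenever $\tau_{\max}\le1$. The statement assumes this range anyway. Combining the result with Lemma~1 ($\Delta_i(s)\ge0$) shows that the teacher is never worse than the anchor under the critic. The improvement grows linearly in $\tau_i$.

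The only delicate point is that the conclusion is purely local. Concavity is assumed only on the segment, so the argument must avoid evaluating $g_i$ anywhere off it. Because $\widetilde a_i$ is constructed on the segment, this is automatic. No Lipschitz (Assumption~4) or distillation (Assumption~3) argument is needed here; those assumptions would enter only when passing from the teacher $\widetilde a_i$ to the trained actor output $a_i^+$.
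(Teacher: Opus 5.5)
Your proof is correct and takes essentially the same route as the paper: both apply the concavity of Assumption~2 along the segment from $a_i^{\rm anc}$ to $a_i^\star$ (the paper calls this Jensen's inequality) to the convex combination $\widetilde a_i$, then rearrange using the definition of $\Delta_i(s)$ in \eqref{eq:gain}. Your extra check that $\tau_i\in[0,1]$ whenever $\tau_{\max}\le1$ is a harmless addition, since the statement already assumes this range.
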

\begin{proof}
By concavity on the segment between $a_i^{\rm anc}$ and $a_i^\star$, Jensen's inequality gives
\begin{align}
Q_\phi(s,(\widetilde a_i,\mathbf a_{-i}^{\rm ctx}))
&\ge(1-\tau_i)Q_\phi(s,(a_i^{\rm anc},\mathbf a_{-i}^{\rm ctx})) \nonumber\\
&\quad+\tau_iQ_\phi(s,(a_i^\star,\mathbf a_{-i}^{\rm ctx})),
\end{align}
which is equivalent to \eqref{eq:teacher_gain}.
\end{proof}

Proposition 1 explains the trust gate. A partial move toward the rectified action preserves a proportional critic gain, so the teacher can improve the anchor while remaining conservative.

\begin{theorem}[Offline-support policy improvement]
If $\mathbb E_{s\sim d_\beta}\|a_i^+-\widetilde a_i\|_2\le\varepsilon$, then
\begin{align}
\mathbb E_{d_\beta}Q_\phi(s,(a_i^+,\mathbf a_{-i}^{\rm ctx}))
\ge&\;\mathbb E_{d_\beta}Q_\phi(s,(a_i^{\rm anc},\mathbf a_{-i}^{\rm ctx})) \nonumber\\
&+\mathbb E_{d_\beta}[\tau_i\Delta_i(s)]-L_Q\varepsilon.
\label{eq:improvement}
\end{align}
Thus, when $\mathbb E_{d_\beta}[\tau_i\Delta_i(s)]>L_Q\varepsilon$, the actor update improves the centralized-critic surrogate in expectation.
\end{theorem}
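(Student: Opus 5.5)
The plan is to combine Proposition~1 with the Lipschitz bound of Assumption~4, first pointwise for each offline-support state $s$ and then in expectation over $d_\beta$. Throughout, the context actions $\mathbf a_{-i}^{\rm ctx}$ stay fixed, so every quantity can be written through the local slice $g_i(\cdot;s)$ defined in \eqref{eq:local_q_map}.

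\textbf{Step 1: a pointwise bound.} For a fixed $s$, split
\begin{equation*}
g_i(a_i^+;s)=g_i(\widetilde a_i;s)+\bigl[g_i(a_i^+;s)-g_i(\widetilde a_i;s)\bigr].
\end{equation*}
\begin{itemize}
\item For the first term, invoke Proposition~1. This needs $0\le\tau_i\le1$, which follows from \eqref{eq:trust} provided $\tau_{\max}\le1$; I will state this explicitly as a standing convention. Proposition~1 gives $g_i(\widetilde a_i;s)\ge g_i(a_i^{\rm anc};s)+\tau_i\Delta_i(s)$.
\item For the bracketed term, Assumption~4 with $u=a_i^+$ and $v=\widetilde a_i$ gives the lower bound $-L_Q\|a_i^+-\widetilde a_i\|_2$.
\end{itemize}
Together these yield
\begin{equation*}
g_i(a_i^+;s)\ge g_i(a_i^{\rm anc};s)+\tau_i\Delta_i(s)-L_Q\|a_i^+-\widetilde a_i\|_2 .
\end{equation*}

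\textbf{Step 2: take expectations.} Integrate the pointwise bound over $s\sim d_\beta$. Linearity of expectation and the hypothesis $\mathbb E_{d_\beta}\|a_i^+-\widetilde a_i\|_2\le\varepsilon$ (Assumption~3) give \eqref{eq:improvement} directly. Integrability is the only thing to check: I will assume $Q_\phi$ is bounded on the support, which holds for bounded rewards and a discount $\gamma<1$. The final sentence of the theorem then follows by rearranging. Lemma~1 gives $\Delta_i\ge0$ and $\tau_i\ge0$, so $\mathbb E[\tau_i\Delta_i]\ge0$, and the strict inequality $\mathbb E[\tau_i\Delta_i]>L_Q\varepsilon$ makes the net gain positive.

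\textbf{Main obstacle.} The delicate point is the phrase ``for all relevant $u,v$'' in Assumption~4. The Lipschitz bound must hold at the actual post-update action $a_i^+$, which need not lie on the segment $[a_i^{\rm anc},a_i^\star]$ where Assumption~2 guarantees regularity. I will read ``relevant'' as covering a neighborhood containing both $\widetilde a_i$ and $a_i^+$, and state this reading. A second subtlety is that $a_i^{\rm anc}$, $a_i^\star$, $\tau_i$ and $\mathbf a_{-i}^{\rm ctx}$ all depend on $s$ and must be measurable functions of $s$. I will note this without belaboring it, since the argument is otherwise purely pointwise.
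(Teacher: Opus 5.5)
Your proposal is correct and follows the paper's proof: you apply the Lipschitz bound of Assumption~4 between $a_i^+$ and $\widetilde a_i$, chain it with Proposition~1, and average over $d_\beta$ using Assumption~3. Your extra caveats are conditions the paper leaves implicit and do not change the argument: $\tau_{\max}\le 1$ so that $\tau_i\in[0,1]$, integrability of $Q_\phi$, and reading ``relevant'' in Assumption~4 as covering both $a_i^+$ and $\widetilde a_i$.
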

\begin{proof}
By Assumption 4,
\begin{equation}
Q_\phi(s,(a_i^+,\mathbf a_{-i}^{\rm ctx}))\ge Q_\phi(s,(\widetilde a_i,\mathbf a_{-i}^{\rm ctx}))-L_Q\|a_i^+-\widetilde a_i\|_2.
\end{equation}
Taking expectation over $d_\beta$, applying Assumption 3, and using Proposition 1 proves \eqref{eq:improvement}.
\end{proof}

Candidate search first produces a nonnegative centralized-critic gain over the anchor action, the trust gate admits only the supported portion of this gain, and teacher distillation transfers the resulting correction to the local actor. Under the stated local ranking, regularity, and distillation-error conditions, if the expected trusted rectification gain exceeds $L_Q\varepsilon$, the post-update actor has a larger centralized-critic surrogate value than the anchor in expectation over $d_\beta$. This gives the offline-support policy-improvement guarantee used by AERIS.

\begin{figure*}[!t]
\centering
\includegraphics[width=\textwidth]{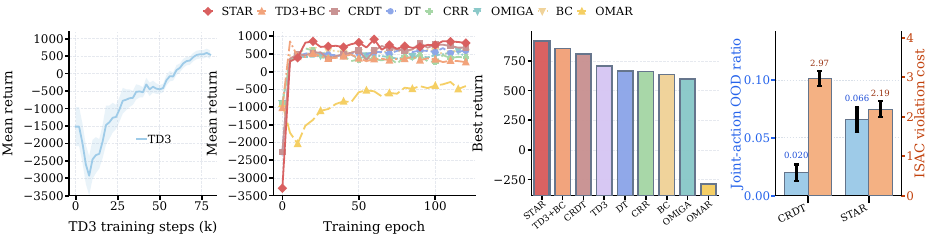}
\caption{Training and ablation results under random mobility. The first three panels show TD3 behavior-policy training, offline policy evaluation curves, and best-checkpoint return comparison. The fourth panel compares CRDT and STAR-CRDT in joint-action OOD ratio and ISAC violation cost. STAR-CRDT reduces violations while keeping critic-corrected actions near useful support.}
\label{fig:train_support}
\end{figure*}

\begin{figure*}[!t]
\centering
\includegraphics[width=\textwidth]{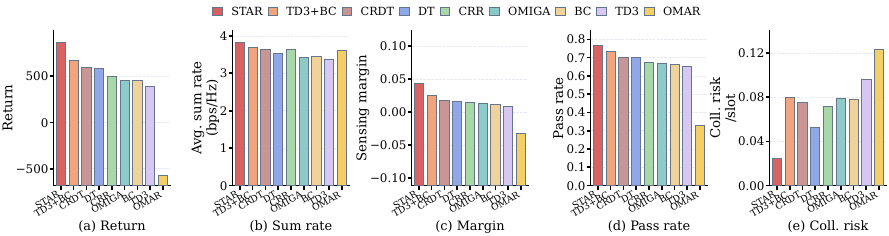}
\caption{Random-mobility offline evaluation. The compact five-panel row reports return, communication sum rate, sensing margin, sensing pass rate, and collision-risk count. STAR-CRDT obtains the best overall tradeoff.}
\label{fig:random_metrics}
\end{figure*}

\begin{figure*}[!t]
\centering
\includegraphics[width=\textwidth]{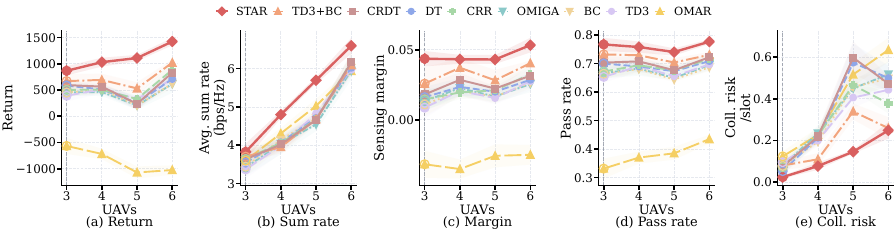}
\caption{System-scale evaluation under random mobility. The compact five-panel row follows Fig.~\ref{fig:random_metrics} and reports return, communication sum rate, sensing margin, sensing pass rate, and collision-risk count. STAR-CRDT remains strongest when the trained shared actor is evaluated with more UAVs.}
\label{fig:scale}
\end{figure*}

\section{Experimental Evaluation}

\subsection{Evaluation Setup}

We evaluate AERIS under random-mobility and road-map deployment protocols. The first isolates fixed-log offline improvement under the bounded Brownian model used for log collection, while the second freezes each learned policy and transfers it without road-specific tuning to OSM road-constrained mobility. Thus, random mobility supplies a high-randomness source log, while road-map deployment mimics the common mismatch between available offline logs and actual application scenarios.

The evaluation instantiates the model in Sections~III-A and III-B with the parameters in Table~\ref{tab:params}. By default, $M=3$ UAVs serve $K=6$ users and $L=3$ targets, with two users and one target per UAV. Service sets use quota-based nearest-UAV association, assigning each user or target to the closest UAV that has not reached its user or target cap \cite{lyu2023uav_isac,meng2022uav_periodic,gao2024marl_uav_isac}. The scale study varies $M$ from 3 to 6 UAVs to test coordination as the fleet size increases \cite{gao2024marl_uav_isac,cheng2025radcom}. For high-risk low-altitude urban operation, $d_{\min}=30$ m follows conservative multi-UAV ISAC/RadCom safety settings \cite{gao2024marl_uav_isac,cheng2025radcom}. Other values follow UAV-link, ISAC beamforming, mobility/safety, and OSM/GIS road-network studies \cite{alhourani2014optimal,zeng2019energy,lyu2023uav_isac,meng2022uav_periodic,camp2002survey,boeing2017osmnx,openstreetmap_copyright,geofabrik_osm_gis}.

The fixed log is collected once by the preliminary TD3 behavior controller and then held fixed. All offline methods receive the same trajectories, rewards, local histories, global states for centralized training, and decentralized observations, so the comparison measures improvement from the same support without extra online interaction or road-map adaptation.

\textbf{Baselines.} We compare with representative offline RL and offline MARL methods spanning imitation, conservative updates, sequence modeling, and multi-agent actor rectification \cite{levine2020offline,gulcehre2020rl_unplugged,luong2019deep_rl_wireless}. The dataset, training budget, model capacity where applicable, and evaluation episodes are kept consistent.
\begin{itemize}
\setlength{\itemsep}{0pt}
\setlength{\parsep}{0pt}
\setlength{\topsep}{1pt}
\item \textbf{Behavior cloning (BC)}~\cite{pomerleau1989alvinn} tests whether pure supervised imitation of logged actions is sufficient.
\item \textbf{DT} is a Decision Transformer baseline~\cite{chen2021decision_transformer} that learns return-conditioned decentralized actions from local histories.
\item \textbf{TD3+BC}~\cite{fujimoto2021td3bc} and \textbf{CRR} (Critic Regularized Regression)~\cite{wang2020crr} represent conservative offline actor updates.
\item \textbf{CRDT} is our sequence-model ablation. It keeps the local Decision-Transformer actor~\cite{chen2021decision_transformer} and centralized critic regularization, but removes support-aware candidate search and trust-gated distillation. Comparing CRDT with STAR-CRDT isolates whether trust-gated local rectification is necessary beyond critic regularization alone.
\item \textbf{OMIGA}~\cite{wang2023omiga} and \textbf{OMAR}~\cite{pan2022omar} are offline MARL baselines that respectively emphasize global-to-local value regularization and actor rectification.
\item \textbf{Online TD3}~\cite{fujimoto2018td3} is reported as the behavior-policy reference rather than an additional online competitor.
\end{itemize}
\vspace{-0.35em}

\textbf{Evaluation Metrics.} We report return $J=\sum_{t=0}^{T-1}r_t$, average communication sum rate $R_{\rm sum}=T^{-1}\sum_t U_{\rm comm}(t)$, sensing pass rate (fraction of sensing links satisfying thresholds), sensing margin (average normalized excess over thresholds), and collision-risk count $C_{\rm col}=T^{-1}\sum_t N_t^{\rm col}$, where lower is better. Effective improvement should jointly improve return, communication, sensing, and safety.

\begin{table}[t]
\vspace{-0.45em}
\centering
\caption{Key Parameters}
\label{tab:params}
\scriptsize
\renewcommand{\arraystretch}{1.03}
\begin{tabular*}{\linewidth}{@{\extracolsep{\fill}}p{0.50\linewidth}p{0.41\linewidth}@{}}
\specialrule{\heavyrulewidth}{0pt}{0pt}
\rowcolor{tablegray}\multicolumn{2}{c}{\textbf{System parameters}}\\
\specialrule{\lightrulewidth}{0pt}{0pt}
Service area $L_x\times L_y$ & $500\times500$ m$^2$ \\
Episode horizon $T$ & 200 slots \\
Main topology $(M,K,L)$ & $(3,6,3)$ \\
Scale sweep $M$ & 3 to 6 UAVs \\
UAV altitude $[z_{\min},z_{\max}]$ & $[100,150]$ m \\
UAV speed $V_{\max}$ & 5 m/slot \\
Ground speed $v_u^{\max}$ & 0.5 m/slot \\
Safety distance $d_{\min}$ & 30 m \\
Array size $N_t$ & 4 antennas \\
Power budget $P_{\max}$ & 20 dBm \\
Noise power $\sigma^2$ & $10^{-4}$ \\
Bandwidth $B$ & 1 normalized Hz \\
Sensing threshold $\Gamma$ & $2\times10^{-6}$ \\
Log size & 1000 episodes, 200k transitions \\
Held-out evaluation & 50 episodes \\
\specialrule{\lightrulewidth}{0pt}{0pt}
\rowcolor{tablegray}\multicolumn{2}{c}{\textbf{STAR-CRDT parameters}}\\
\specialrule{\lightrulewidth}{0pt}{0pt}
Actor context length & $K=50$ slots \\
Actor hidden size & 256 \\
Transformer blocks & 6 layers, 8 heads \\
Critic hidden size & 512 \\
Discount and expectile & $\gamma=0.99$, $\kappa=0.8$ \\
Q regularization $\lambda_Q$ & 0.18 \\
Rectification samples $N_r$ & 4 per anchor \\
Rectification std. & $\sigma_\pi=0.18$, $\sigma_\beta=0.06$ \\
Trust range & 0.35 to 0.45 \\
\bottomrule
\end{tabular*}
\vspace{-0.9em}
\end{table}

\begin{table*}[!t]
\centering
\providecommand{\best}[1]{\textcolor{red}{\textbf{\boldmath #1}}}
\providecommand{\second}[1]{\textcolor{blue}{\underline{\textbf{\boldmath #1}}}}
\providecommand{\alg}[1]{\textbf{\textit{#1}}}
\providecommand{\metric}[1]{\textbf{\textit{#1}}}
\caption{Zero-shot road-map deployment evaluation. Metrics are shown by row for each road scene. Best values are bold red, second-best values are underlined bold blue, and lower is better only for Coll. risk.}
\label{tab:road_eval}
\tiny
\setlength{\tabcolsep}{1.05pt}
\renewcommand{\arraystretch}{0.82}
\resizebox{\textwidth}{!}{%
\begin{tabular}{llccccccccc ccccccccc}
\toprule
\metric{Scene} & \metric{Metric} & \multicolumn{9}{c}{\textbf{$M=3$}} & \multicolumn{9}{c}{\textbf{$M=6$}} \\
\cmidrule(lr){3-11}\cmidrule(lr){12-20}
 & & \alg{STAR} & \alg{TD3+BC} & \alg{CRDT} & \alg{DT} & \alg{CRR} & \alg{OMIGA} & \alg{BC} & \alg{TD3} & \alg{OMAR} & \alg{STAR} & \alg{TD3+BC} & \alg{CRDT} & \alg{DT} & \alg{CRR} & \alg{OMIGA} & \alg{BC} & \alg{TD3} & \alg{OMAR} \\
\midrule
S1 & \metric{Ret.} & \best{648.3} & 570.9 & \second{576.8} & 512.3 & 516.5 & 384.9 & 395.8 & 410.0 & -675.0 & \best{1288.1} & \second{942.3} & 272.7 & 372.1 & 329.4 & 387.6 & 319.9 & 351.8 & -1569.2 \\
 & \metric{Comm.} & \best{3.63} & 3.27 & 3.37 & 3.29 & \second{3.49} & 3.25 & 3.29 & 3.24 & 3.37 & \best{6.76} & 5.74 & 5.65 & 5.55 & \second{5.82} & 5.74 & 5.63 & 5.64 & 5.37 \\
 & \metric{Pass} & \second{0.707} & \best{0.715} & \second{0.707} & 0.689 & 0.685 & 0.661 & 0.659 & 0.668 & 0.310 & \best{0.753} & \second{0.749} & 0.677 & 0.681 & 0.678 & 0.673 & 0.674 & 0.676 & 0.396 \\
 & \metric{Margin} & \best{0.0366} & \second{0.0341} & 0.0267 & 0.0237 & 0.0213 & 0.0194 & 0.0211 & 0.0181 & -0.0574 & \second{0.0572} & \best{0.0585} & 0.0311 & 0.0351 & 0.0341 & 0.0339 & 0.0316 & 0.0318 & -0.0401 \\
 & \metric{Coll. risk} & \best{0.028} & 0.063 & 0.073 & \second{0.058} & 0.080 & 0.126 & 0.118 & 0.116 & 0.106 & \best{0.333} & \second{0.424} & 0.766 & 0.655 & 0.742 & 0.632 & 0.700 & 0.677 & 0.878 \\
\midrule
S2 & \metric{Ret.} & \best{520.7} & -54.3 & \second{97.8} & 58.3 & -336.5 & -123.8 & -29.9 & -69.7 & -166.8 & \best{389.7} & -608.8 & \second{-592.4} & -622.6 & -1367.7 & -667.0 & -629.3 & -761.4 & -1341.3 \\
 & \metric{Comm.} & \second{3.33} & 2.74 & 3.05 & 2.92 & 2.50 & 2.82 & 2.84 & 2.81 & \best{3.61} & \second{4.39} & 3.63 & 4.07 & 3.91 & 3.57 & 3.91 & 3.92 & 3.81 & \best{4.72} \\
 & \metric{Pass} & \best{0.695} & 0.576 & \second{0.602} & 0.580 & 0.486 & 0.534 & 0.547 & 0.550 & 0.513 & \best{0.712} & 0.642 & \second{0.647} & 0.630 & 0.535 & 0.613 & 0.623 & 0.606 & 0.587 \\
 & \metric{Margin} & \best{0.0555} & 0.0066 & \second{0.0363} & 0.0356 & -0.0056 & 0.0225 & 0.0230 & 0.0240 & 0.0156 & \best{0.0814} & 0.0583 & \second{0.0654} & 0.0650 & 0.0348 & 0.0608 & 0.0645 & 0.0597 & 0.0481 \\
 & \metric{Coll. risk} & 0.127 & 0.133 & 0.159 & \second{0.112} & 0.119 & 0.154 & \best{0.092} & 0.139 & 0.261 & \best{0.732} & \second{0.896} & 1.175 & 1.070 & 1.159 & 1.024 & 1.075 & 1.080 & 1.694 \\
\midrule
S3 & \metric{Ret.} & \best{531.8} & \second{258.4} & 119.9 & -38.3 & 74.1 & -190.1 & -139.2 & -142.5 & -539.5 & \best{873.8} & \second{376.5} & -316.1 & -407.9 & -263.3 & -492.7 & -607.5 & -428.7 & -1557.7 \\
 & \metric{Comm.} & \best{3.47} & 2.90 & 2.77 & 2.41 & 2.57 & 2.45 & 2.44 & 2.55 & \second{3.14} & \best{6.10} & 5.07 & 4.89 & 4.64 & 4.77 & 4.47 & 4.47 & 4.55 & \second{5.35} \\
 & \metric{Pass} & \best{0.683} & \second{0.635} & 0.599 & 0.576 & 0.589 & 0.503 & 0.529 & 0.517 & 0.404 & \best{0.720} & \second{0.665} & 0.616 & 0.587 & 0.604 & 0.570 & 0.553 & 0.570 & 0.448 \\
 & \metric{Margin} & \best{0.0402} & \second{0.0075} & 0.0013 & -0.0056 & 0.0013 & -0.0130 & -0.0142 & -0.0130 & -0.0310 & \best{0.0689} & \second{0.0292} & 0.0161 & 0.0112 & 0.0145 & 0.0063 & 0.0006 & 0.0042 & -0.0154 \\
 & \metric{Coll. risk} & \best{0.018} & \second{0.057} & 0.089 & 0.116 & 0.062 & 0.068 & 0.092 & 0.079 & 0.179 & \second{0.409} & \best{0.325} & 0.868 & 0.717 & 0.677 & 0.657 & 0.691 & 0.613 & 1.105 \\
\bottomrule
\end{tabular}%
}
\end{table*}
\subsection{Random-Mobility Offline Policy Improvement}

We first test whether offline training can refine the initial behavior controller without reintroducing the online exploration risk in Section~II. Fig.~\ref{fig:train_support} shows that TD3 forms an initial controller, but its online curve remains noisy because value learning still depends on trial rollouts. After the dataset is frozen, imitation and conservative updates improve only part of the objective, whereas STAR-CRDT reaches the highest and most stable offline plateau. This supports the intended use of AERIS, where online interaction stops after an initial log exists and policy refinement continues offline.

\textbf{Ablation study.} We use CRDT as the main ablation of STAR-CRDT, as it removes support-aware candidate search and trust-gated distillation from the full design. The fourth panel of Fig.~\ref{fig:train_support} compares CRDT and STAR-CRDT using the same joint-action OOD ratio and ISAC violation cost as Fig.~\ref{fig:motivation}(b). CRDT remains closer to the log, but still leaves a higher violation cost because critic regularization alone does not identify which local correction is globally beneficial. Compared with CRDT and the baselines in Fig.~\ref{fig:motivation}(b), STAR-CRDT lowers violation cost while avoiding the large OOD shift caused by aggressive rectification, since it permits only a slightly larger but controlled support shift through critic-improving candidate selection and trusted-teacher distillation. Together with Fig.~\ref{fig:random_metrics}, where CRDT is stronger than plain DT but still trails STAR-CRDT, this ablation shows that centralized critic feedback must be coupled with support-aware rectification and trust-gated distillation to obtain reliable offline improvement.

The random-mobility results further verify that STAR-CRDT does more than avoid sensing-threshold and collision-risk violations. In Fig.~\ref{fig:random_metrics}, STAR-CRDT improves return by 29.3\% over TD3+BC, the strongest non-STAR baseline. It also improves the best baseline communication sum rate by 3.4\%, sensing pass rate by 4.8\%, and sensing margin by 69.1\%, while reducing collision-risk count by 54.2\%. These gains align with \textbf{P0} because the policy improves communication and sensing while reducing sensing-threshold violations and collision-risk events.

The remaining baselines confirm the imitation and OOD dilemma in Section~II-B. BC and DT remain close to the log and cannot reliably remove residual sensing violations. TD3+BC and CRR are safer than aggressive actor maximization, but their behavior regularization limits improvement once the logged controller is suboptimal. OMAR tends to rectify actions more aggressively and obtains poor return, which is consistent with the danger of leaving joint-action support. Thus, the comparison supports the offline-support improvement mechanism proved in Section~IV-C.

\subsection{System-Scale Evaluation}

We then test zero-shot system-scale transfer. The trained shared local actors are evaluated with $M=3$ to $6$ UAVs without additional offline retraining, while keeping two communication users and one sensing target per UAV. Thus, local observation/action semantics stay fixed but joint coordination becomes harder. STAR-CRDT ranks first at every scale. At $M=6$, it improves return by 39.3\% over TD3+BC and by 60.8\% over CRR, while maintaining the best sensing pass rate and sensing margin.

The scale trend shows that the local actor has learned a rule that remains useful as the joint-action space and pairwise safety constraints grow. Larger teams create more interference, coupled sensing beams, and collision-risk events. STAR-CRDT evaluates such team-level effects through the centralized critic before distillation, whereas conservative baselines inherit residual violations and aggressive rectification becomes less stable. This supports the CTDE choice in AERIS, where centralized information selects corrections during training and only the local sequence actor is deployed.

\subsection{Zero-Shot Road-Map Deployment}

\begin{figure}[!t]
\centering
\includegraphics[width=\linewidth]{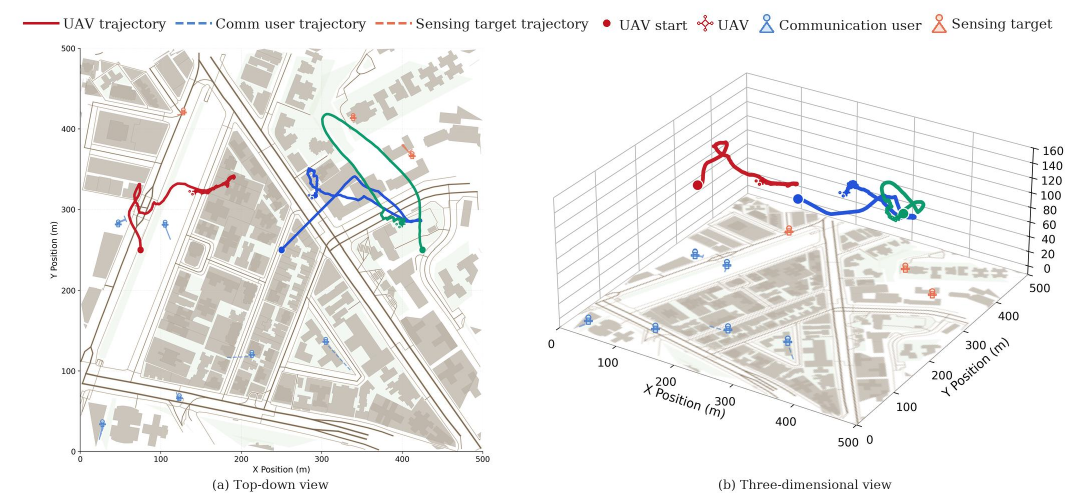}
\caption{Representative STAR-CRDT rollout under road-map mobility. The left and right panels show top-down and 3-D views, respectively.}
\label{fig:road_traj}
\end{figure}

Road-map deployment tests whether policies trained on the fixed random-mobility log can transfer to road-constrained urban mobility without adaptation. We deploy frozen policies on real OSM road graphs and Geofabrik OSM/GIS data extracts \cite{boeing2017osmnx,openstreetmap_copyright,geofabrik_osm_gis} using three $500\,\mathrm{m}\times500\,\mathrm{m}$ Hong Kong scenes: a dense grid (HK-OSM-S1, $72.6\,\mathrm{km}/\mathrm{km}^2$), an arterial corridor (HK-OSM-S2, $15.5\,\mathrm{km}/\mathrm{km}^2$), and a sparse local-road scene (HK-OSM-S3, $9.8\,\mathrm{km}/\mathrm{km}^2$). No road-scene rollout is used for training, early stopping, checkpoint selection, or hyperparameter tuning. We report both $M=3$ and $M=6$ to test increasing interference and safety coupling.

The representative rollout in Fig.~\ref{fig:road_traj} shows coordinated UAV motion around road-constrained users and targets. The 3-D view confirms feasible altitude control, and the trajectories balance communication coverage, sensing target tracking, and safety separation. Because road maps can concentrate users and targets along shared corridors, STAR-CRDT must learn a joint ISAC response rather than a purely communication-oriented motion.

Table~\ref{tab:road_eval} shows that STAR-CRDT obtains the best return in all six scene-scale settings. Across the 30 metric rows, it is best in 24 rows and remains top-two in 29 rows. At $M=6$, it improves return by 36.7\% in the dense grid and 132.1\% in the sparse-road scene over the strongest baseline. Its only non-top-two case is collision risk in arterial $M=3$, where BC has fewer risk events but much lower return, while OMAR sometimes gains communication rate at the cost of sensing and safety.

These results separate robustness from overfitting to random mobility. The three road structures create different route changes, demand concentrations, and coverage gaps, yet STAR-CRDT remains strongest because the centralized critic trains support-aware corrections instead of memorizing one mobility geometry. The transfer gains suggest that STAR-CRDT learns reusable local ISAC responses, including coverage recovery, sensing-margin preservation, and safety-aware separation. Overall, AERIS improves communication, sensing, and safety from fixed logs under decentralized execution.
\FloatBarrier

\section{Related Works}

\textbf{Multi-UAV ISAC and RL-based wireless control.} ISAC integrates communication and sensing, and UAVs add controllable aerial links with line-of-sight opportunities \cite{liu2022isac,zhang2021overview,liu2020joint_radar_comm,zeng2016uav,mozaffari2019tutorial}. Existing multi-UAV ISAC studies mainly optimize trajectories, beamforming, and sensing quality through model-based optimization or online learning \cite{lyu2023uav_isac,meng2022uav_periodic,gao2024marl_uav_isac,cheng2025radcom}. RL and MARL have also been used for adaptive wireless networking and resource allocation \cite{mao2016resource,xu2018experience,luong2019deep_rl_wireless,lowe2017maddpg,foerster2018coma,rashid2018qmix}. These works provide online or model-based baselines, while AERIS focuses on fixed-log improvement when new exploration is costly.

\textbf{Offline RL and offline MARL.} Offline RL learns from fixed datasets \cite{levine2020offline,fu2020d4rl,gulcehre2020rl_unplugged}. Main families include imitation such as BC, support-constrained value learning such as Batch-Constrained deep Q-learning (BCQ) and Bootstrapping Error Accumulation Reduction (BEAR) \cite{fujimoto2019bcq,kumar2019bear}, conservative or implicit value learning such as Conservative Q-Learning (CQL) and Implicit Q-Learning (IQL) \cite{kumar2020cql,kostrikov2022iql}, actor-regularized updates such as TD3+BC and CRR \cite{fujimoto2021td3bc,wang2020crr}, and sequence models such as DT \cite{chen2021decision_transformer}. Offline MARL extends fixed-dataset learning to coordinated agents with centralized training and decentralized policies. OMIGA uses implicit global-to-local value regularization, while OMAR searches critic-improving actor corrections \cite{wang2023omiga,pan2022omar}. Yet such methods may inherit imperfect logs or rely on unsupported critic gains. STAR-CRDT addresses both issues by searching for local corrections near logged actions and distilling them only when the centralized critic predicts reliable team-level improvement.

\section{Conclusion}

This paper presented AERIS, a novel offline policy improvement framework for distributed multi-UAV ISAC. AERIS reformulates trajectory-and-beamforming control as a fixed-log CTDE offline MARL problem, enabling local UAV execution while using logged global information to assess team-level communication, sensing, and safety effects. To solve this problem, we developed STAR-CRDT, a new algorithm that distills support-aware teacher corrections selected by centralized critic feedback. This lets AERIS improve beyond imperfect logs without degenerating into pure imitation or unsupported critic maximization. With an offline-support improvement guarantee and evaluations under random mobility, system-scale transfer, and unseen OSM road maps, AERIS consistently improves return, communication, sensing, and safety from fixed flight logs before deployment.

\bibliographystyle{IEEEtran}
\bibliography{references}

\end{document}